\newcommand{\concept}[1]{\textbf{#1}}
\newcommand{\etal}{\textit{et al.}}
\newcommand{\Tuple}[1]{\left\langle{#1}\right\rangle}
\newtheorem{theorem}{Theorem}[section]
\newtheorem{lemma}[theorem]{Lemma}
\newtheorem{corollary}[theorem]{Corollary}
\DeclarePairedDelimiter{\card}{\lvert}{\rvert}
\newcommand{\Set}[1]{\left\{{#1}\right\}}
\newcommand{\SetWhere}[2]{\left\{{#1}\;\middle\vert\;{#2}\right\}}
\newcommand{\newData}[2]{\newcommand{#1}{\DataSty{#2}\xspace}}
\newData{\Leader}{leader}
\newData{\Frozen}{frozen}
\title{Stably computable relations and predicates}
\author{James Aspnes\\Yale University}
\date{}
\begin{document}

\maketitle

\begin{abstract}
    A population protocol stably computes a relation $R(x,y)$ if 
    its output always stabilizes and $R(x,y)$ holds 
    if and only if $y$ is a possible output for input $x$.
    Alternatively, a population protocol computes a predicate
    $R(\Tuple{x,y})$ on pairs
    $\Tuple{x,y}$ if its output stabilizes on the truth value of the predicate when
    given $\Tuple{x,y}$ as input.

    We consider how stably computing $R(x,y)$ and $R(\Tuple{x,y})$
    relate to each other. We show that for population protocols
    running on a complete interaction graph with $n≥2$, if
    $R(\Tuple{x,y})$ is a stably computable predicate such that
    $R(x,y)$ holds for at least one $y$ for each $x$, then $R(x,y)$ is a stably
    computable relation. In contrast, the converse is not necessarily true
    unless $R(x,y)$ holds for exactly one $y$ for each $x$.
\end{abstract}

\section{Introduction}

A \concept{population protocol}~\cite{AngluinADFP2006} models a
distributed computation by a collection of $n$ finite-state agents that
interact directly in pairs determined by the edges of a directed
\concept{interaction graph} whose vertices are the agents.
Each agent has a state from some finite
\concept{state set} $Q$.
The states of all agents are described by a \concept{configuration}
$C∈Q^n$, and
this configuration is updated 
at each step of an execution when an \concept{adversary scheduler}
chooses two agents to interact, updating their states according to a
joint \concept{transition function} $δ:Q×Q→Q×Q$ while leaving the
states of all other agents unchanged. The adversary is constrained by
a \concept{global fairness condition} that says if $C→C'$ is a
possible transition and $C$ occurs infinitely often in an execution, so does $C'$.

The \concept{input} $x$ to a population protocol is an element of $X^n$ for some
\concept{input alphabet} $X$, and is translated to
initial states of the agents by an \concept{input map} $I:X→Q$.
A similar \concept{output map} $O:Q→Y$ translates states to
an output alphabet $Y$, yielding an output $y∈Y^n$.
A protocol is \concept{output-stable} if it eventually converges to a
fixed output in all fair executions. 

A population protocol \concept{stably computes} the
relation $R(x,y)$ that holds whenever $x$ is an input and $y$ is a
possible stable output arising in some execution starting with that
input. A relation is \concept{stably computable} if there is some output-stable
protocol that stably computes it. Because an output-stable protocol
must eventually stabilize on some output, any stably computable
relation $R(x,y)$ is \concept{total} in the sense that for any input $x$ there
is at least one output $y$ such that $R(x,y)$ holds.
Because population protocols are in general non-deterministic, this
output is not necessarily unique. In the case where there is
exactly one such $y$, $R(x,y)$ is \concept{single-valued}

Most work on population protocols considers stable computation of
predicates, where every agent outputs the same value $0$ or $1$ when
the computation finishes. For a complete interaction graph, the class of predicates that
can be stably computed has been shown~\cite{AngluinAER2007} to be precisely the semilinear
predicates, those for which the counts of agents in each state in the
set of positive inputs 
form a finite union of sets of the form
$\SetWhere{\vec{b}+∑_{i=1}^k a_i\vec{x_i}}{a_i∈ℕ}$.
But the more general case of stably computable relations is
less well understood.

\subsection{Our results}

Given two vectors of equal length, let their \concept{zip}
$\Tuple{x,y}$ be a single vector of pairs defined by $\Tuple{x,y}_i =
\Tuple{x_i,y_i}$.  There is a natural correspondence between relations
$R(x,y)$ and predicates $R(\Tuple{x,y}) = R(x,y)$, where we abuse
notation by using the same name for both the relation and its
corresponding predicate.

We consider how a stable computation of $R(x,y)$, considered as
a relation between inputs $x$ and outputs $y$, relates to stable
computation of $R(\Tuple{x,y})$, considered as a predicate on inputs
$\Tuple{x,y}$.

For the case of a complete
interaction graph with sufficiently many agents, we will show that:
\begin{enumerate}
    \item If $R(\Tuple{x,y})$ is a stably computable
        predicate such that $R(x,y)$ is total,
        then $R(x,y)$ is a stably computable relation.
        (Theorem~\ref{theorem-predicate-implies-relation}.)
    \item The converse is not true: there are stably computable
        relations $R(x,y)$ for which the corresponding predicate
        $R(\Tuple{x,y})$ is not stably computable. In particular, the
        reachability relation $C \stackrel{*}{→} C'$ for any population
        protocol is stably computable as a relation, but there are
        protocols for which the corresponding reachability predicate
        $\stackrel{*}{→}(\Tuple{C,C'})$ is not.
        (Corollary~\ref{corollary-stably-computable-relation-not-predicate}.)
    \item However, in the special case where $R(x,y)$ is
        \concept{single-valued}, meaning that for each $x$ there is
        exactly one $y$ satisfying $R(x,y)$,
        then $R(\Tuple{x,y})$ is a stably computable
        predicate. (Theorem~\ref{theorem-single-valued}.)
\end{enumerate}

The assumption of sufficiently many agents avoids having to deal with
some annoying small cases. Formally, we write that a protocol
stably computes a predicate $P(x)$ for $n≥n_0$ if it stabilizes
to the correct output in all fair executions with $n≥n_0$; and that it stably computes a
relation $R(x,y)$ for $n≥n_0$ if, for any starting input $x$ on
$n≥n_0$ agents, it always stabilizes in all fair executions with
$n≥n_0$, and $R(x,y)$ holds if and only if $y$ is a possible output in
at least one such execution. We show in Lemma~\ref{lemma-small-cases}
that any relation that is stably computable for all $n≥n_0$ for some
constant $n_0$ is stably computable for all $n≥2$, and also for all $n≥1$
provided it is single-valued when $n=1$.

\section{Stably computable predicates give stably computable
relations}

To show that a stably computable predicate $R(\Tuple{x,y})$
corresponds to a stably computable
relation $R(x,y)$, we construct a
protocol that on input $x$ wanders nondeterministically through all
possible values of $y$, stopping when $R(\Tuple{x,y})$ holds. The test
for $R(\Tuple{x,y})$ is done using an embedded protocol implementing
the algorithm of Angluin~\etal~\cite{AngluinACFJP2005} for computing a
semilinear predicate with stabilizing inputs; this has the property of
eventually stabilizing on the value of $R(\Tuple{x,y})$ at all agents
if $\Tuple{x,y}$ remains unchanged for long enough. Once all agents
agree that $R(\Tuple{x,y})$ holds, no further changes are made to $y$.
Managing changes to $y$ and execution of the $R(\Tuple{x,y})$ protocol
is handled by a leader elected using the usual algorithm based on the
transition $LL→LF$~\cite{AngluinADFP2006}.

We assume that the interaction graph is complete and that $n≥3$. The
restriction on $n$ will be revisited in
Section~\ref{section-small-cases}.

Pseudocode for the transition function is given in
Algorithm~\ref{alg-predicate-implies-relation}.
Each agent stores four fields: its input $x$, which does not change
during the computation; a candidate output $y$, which starts in some
default value and is updated nondeterministically; a \Leader bit
that is initially $1$ for all agents; and a state $q$ for the embedded
$R(\Tuple{x,y})$ protocol. We write $δ_R$ for the transition function
of the embedded protocol and assume that the output of the embedded
protocol can be read from a component $q.\Frozen$ that eventually 
stabilizes to $1$ when $R(\Tuple{x,y})$ holds. This is used to
prevent any further updates to $y$ once an acceptable $y$ is generated.

\begin{algorithm}
    \Procedure{$δ(i,r)$}{
        \tcp{$i$ is initiator, $r$ is responder}
        \Switch{$i.\Leader$, $r.\Leader$}{
            \Case{$1,1$}{
                \tcp{responder loses \Leader bit}
                $r.\Leader ← 0$
            }
            \Case{$1,0$}{
                \tcp{\Leader bit moves to responder}
                $i.\Leader ← 0$\;
                $r.\Leader ← 1$
            }
            \Case{$0,1$}{
                \tcp{responder changes $r.y$ if not frozen}
                \If{$r.\Frozen = 0$}{
                    $r.y ← (r.y + 1) \bmod \card{Y}$
                }
            }
            \Case{$0,0$}{
                \tcp{perform step of $R(\Tuple{x,y})$}
                $i.q, r.q ← δ_R(i.q, r.q)$
            }
        }
    }
    \caption{Stably computing $R(x,y)$ from $R(\Tuple{x,y})$}
    \label{alg-predicate-implies-relation}
\end{algorithm}

In discussing Algorithm~\ref{alg-predicate-implies-relation}, it will
be convenient to abbreviate the four cases by their values; for
example, a transition that reduces the number of leaders will be a
$1,1$ transition.

To prove the correctness of
Algorithm~\ref{alg-predicate-implies-relation}, we start with a simple
invariant, the proof of which is immediate from inspection of the
code:
\begin{lemma}
    \label{lemma-leader-invariant}
    Every reachable configuration of
    Algorithm~\ref{alg-predicate-implies-relation} has at least one
    agent with a nonzero \Leader bit.
\end{lemma}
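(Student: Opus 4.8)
The plan is to establish the invariant by induction on the number of transitions taken in an execution. For the base case, recall that in the initial configuration every agent has $\Leader = 1$; since the section assumes $n \geq 3$, there is at least one agent, hence at least one agent with a nonzero $\Leader$ bit. (Indeed it suffices that $n \geq 1$.)

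For the inductive step, suppose $C$ is a reachable configuration containing at least one agent whose $\Leader$ bit is nonzero, and let $C → C'$ arise from a transition $δ(i,r)$, where $i$ and $r$ are the two (distinct) interacting agents. I would proceed by cases on which of the four branches of Algorithm~\ref{alg-predicate-implies-relation} is executed. In the $0,1$ branch only $r.y$ is touched, and in the $0,0$ branch only the embedded-protocol components $i.q, r.q$ are touched, so no $\Leader$ bit changes and any agent witnessing the invariant in $C$ still witnesses it in $C'$. In the $1,0$ branch the $\Leader$ bit is cleared on $i$ but simultaneously set on $r$, so $r$ witnesses the invariant in $C'$. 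The only branch that can reduce the number of leaders is the $1,1$ branch; but that branch is taken precisely when both $i$ and $r$ already have $\Leader = 1$, and it clears only $r.\Leader$, leaving $i.\Leader = 1$ untouched, so $i$ witnesses the invariant in $C'$.

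Since the invariant holds in the initial configuration and is preserved by every transition, it holds in every reachable configuration. I do not expect any real obstacle: the single point worth stating explicitly is that the only leader-decreasing transition, the $1,1$ case, can fire only when at least two agents are leaders, so it can never drive the leader count to zero. This is exactly the ``immediate from inspection of the code'' observation the author alludes to.
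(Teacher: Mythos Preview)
Your proposal is correct and is exactly the ``immediate from inspection of the code'' argument the paper has in mind, just spelled out in full: an induction on steps with a case analysis over the four branches, the only nontrivial case being that the $1,1$ transition requires two leaders to fire and removes only one.
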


Next, we show that every output permitted by $R(\Tuple{x,y})$ is reached in some execution:
\begin{lemma}
    \label{lemma-stabilize-on-y}
    If $R(\Tuple{x,y})$ holds and $n≥3$, then there exists an execution of
    Algorithm~\ref{alg-predicate-implies-relation} on a complete
    interaction graph with input $x$ that
    stabilizes with output $y$.
\end{lemma}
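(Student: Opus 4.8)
The plan is to construct an explicit fair execution witnessing the claim. Fix an input $x$ on $n \ge 3$ agents and a target $y$ with $R(\Tuple{x,y})$. I would build the execution in three phases. In the first phase, drive the leader count down to exactly one: repeatedly apply $1,1$ transitions until a single agent holds the \Leader bit. This is always possible since by Lemma~\ref{lemma-leader-invariant} at least one leader survives, and each $1,1$ transition strictly decreases the leader count while at least two leaders remain. In the second phase, use the unique leader to "walk" the $y$-fields of the non-leader agents to the target values: since we have $n \ge 3$, there are at least two non-leaders, and for each non-leader $r$, we can move the leader bit to an adjacent agent via a $1,0$ transition if needed and then apply $0,1$ transitions with $r$ as responder (assuming $r.\Frozen = 0$, which will hold because $R(\Tuple{x,y})$ has not yet stabilized) to increment $r.y$ cyclically mod $\card{Y}$ until it reaches $y_r$. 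The leader's own $y$-field can be set by first passing leadership to a non-leader, adjusting the former leader's $y$, then (if desired) passing leadership back. After this phase every agent $a$ has $a.y = y_a$.

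In the third phase, with $\Tuple{x,y}$ now fixed at the target value, I run the embedded $R(\Tuple{x,y})$ protocol to completion using $0,0$ transitions. To do this cleanly, first move the system into a configuration with at least two non-leaders (automatic for $n \ge 3$ once leadership is fixed at one agent), then schedule $0,0$ interactions among the non-leaders; occasionally the scheduler must involve the leader, which it can do by temporarily demoting it via a $1,0$ transition, running $0,0$ steps, and this is where I invoke the stated property of the embedded protocol: because the inputs $\Tuple{x,y}$ to $\delta_R$ have stabilized, the embedded protocol eventually stabilizes with $q.\Frozen = 1$ at every agent. Once all $\Frozen$ bits are $1$, no $0,1$ transition can alter any $y$-field, so $y$ is permanently fixed at the target.

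Finally, I must argue that the finite prefix constructed above extends to a \emph{fair} execution that still stabilizes on output $y$. The point is that once all agents have $a.y = y_a$ and the embedded protocol has converged with all $\Frozen = 1$, the only reachable configurations are those differing in the placement of the single leader bit and in internal $q$-state transitions that do not affect $\Frozen$ or $y$; in all of them the output map reads off $y$. So I let the scheduler proceed arbitrarily subject to fairness from this point on, and output-stability of the whole protocol (together with the fact that $y$ is frozen) guarantees the execution stabilizes with output exactly $y$.

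The main obstacle is the bookkeeping around the single leader bit: every phase requires the leader to "service" each agent in turn, but servicing an agent as responder in a $0,1$ or $0,0$ transition requires the partner to have the right \Leader value, so the construction must interleave $1,0$ transitions to shuttle leadership around without ever losing it (Lemma~\ref{lemma-leader-invariant} guarantees we never do) and without accidentally creating a second leader (which a lone $1,0$ transition never does). Handling $n = 3$ as the tight case — where there are exactly two non-leaders — needs a little care to confirm there is always enough room to move leadership while adjusting a given agent's $y$, but no genuine difficulty arises.
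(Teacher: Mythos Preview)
Your three-phase construction matches the paper's proof almost exactly: reduce to one leader via $1,1$ transitions, walk leadership around to set the $y$-fields via $0,1$ and $1,0$ transitions, then run the embedded tester via $0,0$ transitions (shuttling the leader bit to a third agent whenever the desired embedded step involves the current leader, which is where $n\ge 3$ is actually used). Your extra paragraph about extending the finite prefix to a fair infinite execution is a reasonable addition, though the paper is content to stop once an output-stable configuration is reached.

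There is one slip in your Phase~2. In the $0,1$ case the responder has $\Leader=1$, so it is the \emph{leader} whose $y$-field is incremented, not a non-leader's. The correct procedure is therefore the reverse of what you describe: move leadership \emph{to} each agent $\ell$ in turn (via $1,0$), then apply $0,1$ transitions with $\ell$ as responder until $\ell.y=y_\ell$, and repeat. Your sentence about handling the original leader's $y$-field (``passing leadership to a non-leader, adjusting the former leader's $y$'') reflects the same reversal of roles. This is a local misreading of the case labels, not a structural gap; once corrected, your argument is essentially the paper's.
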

\begin{proof}
    Construct the execution in stages:
    \begin{enumerate}
        \item Apply $1,1$ transitions until there is a single nonzero \Leader
            bit.
        \item Let $\ell$ be the agent with the nonzero \Leader bit. Apply $0,1$
            transitions until $\ell.y = y_\ell$. Apply a $1,0$
            transition to move $\ell$ to the next agent and repeat
            until $i.y = y_i$ for all agents $i$.
        \item Finally, run the embedded $R(\Tuple{x,y})$ protocol
            until it stabilizes with $i.\Frozen = 1$ for all agents
            $i$. This
            requires a bit of care to deal with the \Leader bits: if
            the embedded execution requires an interaction between $i$
            and $r$ with $i.\Leader = r.\Leader = 0$, no special treatment is
            required, but in the event that $i.\Leader=1$ or $r.\Leader=1$, use a
            $1,0$ transition to move the \Leader bit to a third agent
            before having $i$ and $r$ interact.
    \end{enumerate}

    At the end of these steps, the protocol will be in a configuration
    with output $y$, and this configuration will be output-stable
    because $i.\Frozen = 1$ for all agents, preventing any changes
    to the input to the embedded $R(\Tuple{x,y})$ protocol or to the output to
    the protocol as a whole.
\end{proof}

We also need to show that we do not stabilize on incorrect outputs:
\begin{lemma}
    \label{lemma-stable-outputs-are-good}
    If Algorithm~\ref{alg-predicate-implies-relation} stabilizes on
    output $y$ on a complete interaction graph when $n≥3$, then $R(\Tuple{x,y})$ holds.
\end{lemma}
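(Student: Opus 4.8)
The plan is to argue by contradiction: suppose the protocol reaches an output-stable configuration $C$ with global output $y$, but $R(\Tuple{x,y})$ is false. The first step is to understand what ``output-stable'' forces on the \Leader bits and the $y$-fields. By Lemma~\ref{lemma-leader-invariant} there is always at least one leader, and since $1,1$ transitions strictly reduce the leader count while $1,0$ transitions merely move a leader, global fairness guarantees that from $C$ we can (and infinitely often must, if repeatedly enabled) reach a configuration with exactly one leader; moreover, once the leader count is one it stays one forever. So in the ``tail'' of a fair execution starting at $C$ the leader count is a fixed value $k\ge 1$, and in fact I will want to push it down to $k=1$ — output-stability of $C$ means $C$ itself already has whatever leader count its fair continuations settle to, and I can reach a single-leader configuration $C'$ with the same output $y$ that is still output-stable.

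The heart of the argument is the following claim: in an output-stable configuration, all agents must have $q.\Frozen = 1$. Suppose not. With a single leader $\ell$ present, fairness lets me repeatedly route the \Leader bit around so that the embedded protocol $δ_R$ runs on the remaining agents exactly as in step~3 of the proof of Lemma~\ref{lemma-stabilize-on-y} (parking the leader on a third agent whenever a $0,0$ interaction would otherwise involve it — here $n \ge 3$ is used). Because the $x$- and $y$-fields are currently not changing (they can only change via a $0,1$ transition on an unfrozen responder, and I can simply avoid scheduling those while driving $δ_R$, appealing to fairness only on the sub-execution that keeps $\Tuple{x,y}$ fixed), the embedded protocol sees a fixed stabilizing input $\Tuple{x,y}$; by the correctness of the Angluin~\etal\ stabilizing-input construction it eventually stabilizes with $q.\Frozen$ equal to the truth value of $R(\Tuple{x,y})$ at every agent. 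Since we assumed $R(\Tuple{x,y})$ is false, $q.\Frozen$ stabilizes to $0$ everywhere. But then, with all agents unfrozen, a sequence of $0,1$ transitions (again moving the leader appropriately) changes some agent's $y$-field, hence changes the global output — contradicting output-stability of $C$. Therefore every agent has $q.\Frozen = 1$ in $C$; but the embedded protocol is correct, so $q.\Frozen = 1$ everywhere forces $R(\Tuple{x,y})$ to be true, the desired contradiction.

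The step I expect to be the main obstacle is making the ``run $δ_R$ without disturbing $\Tuple{x,y}$, while shuttling the leader'' argument fully rigorous against a genuinely adversarial, globally fair scheduler. The subtlety is that output-stability is a property of \emph{all} fair executions from $C$, so I cannot just exhibit one convenient execution; instead I must argue that the hypothetical reachable configuration in which some $y$-field differs is actually forced to be visited (or that its reachability alone already contradicts stability). The clean way to phrase this is: output-stability of $C$ means every configuration reachable from $C$ has global output $y$; then I only need to exhibit a \emph{reachable} configuration with a different output, which is a pure reachability argument and sidesteps fairness entirely for the contradiction, using fairness only to justify that the relevant embedded-protocol stabilization is reachable (which in turn reduces to reachability of the stabilized configuration of the Angluin~\etal\ protocol on fixed input $\Tuple{x,y}$). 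I would also need the small observation that the leader-shuttling in step~3 never gets stuck because $n \ge 3$ guarantees a spare non-interacting agent to hold the \Leader bit.
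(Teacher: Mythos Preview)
Your core argument is the same as the paper's: from an output-stable $C$, use the leader-shuttling trick (parking the \Leader bit on a third agent, which is where $n\ge 3$ enters) to simulate an arbitrary fair run of the embedded $R(\Tuple{x,y})$ tester via $0,0$ transitions while leaving all $y$-fields untouched, and then use a $0,1$ transition on an unfrozen leader to change the output if the tester ever reports~$0$. Your final paragraph correctly identifies that this is purely a reachability argument against the definition of output-stability.

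However, the logical scaffolding in your middle paragraph is tangled in a way that produces an invalid step. You set up a nested contradiction (``assume $R$ false'' outside, ``suppose not all frozen'' inside), but your inner argument never uses the inner hypothesis; it uses only ``$R$ false'' to drive the tester to $\Frozen=0$ everywhere and then change the output. That already contradicts output-stability and finishes the proof. Instead you conclude ``therefore every agent has $q.\Frozen=1$ in $C$'' and then assert that a single snapshot with $\Frozen=1$ everywhere forces $R(\Tuple{x,y})$ to be true. That last inference is not valid: a transient configuration of the Angluin~\etal\ tester can have all $\Frozen$ bits equal to~$1$ even when the predicate is false. The paper avoids this by reversing the order: it first observes directly (no contradiction hypothesis on $R$) that in \emph{every} configuration reachable from $C$ every agent has $\Frozen=1$, since otherwise one can route the leader to the unfrozen agent and apply a $0,1$ step; this makes the tester's input permanently stable, so the tester's eventual stable output is the correct truth value, and that value is already known to be~$1$.
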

\begin{proof}
    Let $C$ be an output-stable configuration of
    Algorithm~\ref{alg-predicate-implies-relation}.
    Then $i.\Frozen = 1$ for all $i$ in all configurations reachable
    from $C$, as otherwise there is a $0,1$ transition that changes
    the output.

    With $n≥3$, the embedded $R(\Tuple{x,y})$ execution
    satisfies global fairness, since any transition it needs to do can
    be arranged by moving the leader bit out of the way as in the
    proof of Lemma~\ref{lemma-stabilize-on-y}.
    Furthermore the inputs to the embedded protocol are now stable.
    It follows that the embedded protocol eventually stabilizes to the
    correct value of $R(\Tuple{x,y})$; but we have already established
    that this value is $1$.
\end{proof}

Finally, we show that Algorithm~\ref{alg-predicate-implies-relation}
stabilizes in all fair executions:
\begin{lemma}
    \label{lemma-always-stabilize}
    Let $R(\Tuple{x,y})$ be a stably computable predicate such that
    $R(x,y)$ is total.
    Then in any globally fair execution on a complete interaction
    graph with $n≥3$,
    Algorithm~\ref{alg-predicate-implies-relation} is eventually
    output-stable.
\end{lemma}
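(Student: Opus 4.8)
The plan is to reason about the set $S$ of configurations that occur infinitely often in the given globally fair execution. Since there are only finitely many configurations, $S$ is finite and nonempty, the execution eventually visits only configurations in $S$, and, applying global fairness repeatedly, every configuration reachable from a configuration in $S$ again lies in $S$; in particular any two configurations in $S$ are mutually reachable. It therefore suffices to produce a single output-stable configuration $C^{*}∈S$: every configuration in $S$ is reachable from $C^{*}$ and hence has the same output as $C^{*}$, so once the execution has entered $S$ for good its output never changes again, which is exactly what output-stability asks for.

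First I would pin down the \Leader{} bits: the number of agents with a nonzero \Leader{} bit never increases, so along the execution it eventually equals some constant $m$, which is positive by Lemma~\ref{lemma-leader-invariant}; consequently every configuration in $S$ has exactly $m$ leaders. The only transitions that decrease the count are the $1,1$ transitions, and one of these is enabled whenever at least two leaders are present; so if $m≥2$ then from any $C∈S$ one could reach a configuration with $m-1$ leaders, which would again lie in $S$, contradicting that all configurations of $S$ have $m$ leaders. Hence $m=1$, and I fix an arbitrary $D∈S$ and construct an output-stable configuration reachable from it.

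Starting from $D$, I first drive the embedded $R(\Tuple{x,y})$ protocol to stability on its current input while leaving all $y$-fields untouched: each embedded step is realized by a $0,0$ transition, and whenever the pair the embedded protocol wishes to run includes the unique leader, a $1,0$ transition first parks that leader on a third agent (possible since $n≥3$, exactly the maneuver in the proof of Lemma~\ref{lemma-stabilize-on-y}), and neither $0,0$ nor $1,0$ transitions touch the $y$-fields. Because the embedded protocol stably computes the predicate $R(\Tuple{x,y})$, from any of its reachable states, with its input held at the current $\Tuple{x,y}$, one can reach a state whose output equals $R(\Tuple{x,y})$ and never changes thereafter; so we reach a configuration $D'∈S$ in which every agent has $q.\Frozen = R(\Tuple{x,y})$ and no later $0,0$ transition alters any $q.\Frozen$. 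If $R(\Tuple{x,y})$ holds, then all agents are frozen, no $0,1$ transition can change a $y$-field, and $1,0$ and $1,1$ transitions never disturb $y$ or $q$, so $D'$ is output-stable and we are done. If $R(\Tuple{x,y})$ fails, then every agent has $q.\Frozen = 0$: using totality of $R(x,y)$ choose $y^{*}$ with $R(\Tuple{x,y^{*}})$, move the leader from agent to agent by $1,0$ transitions, and at each agent $j$ increment $y_{j}$ by repeated $0,1$ transitions from a non-leader until $y_{j}=y^{*}_{j}$. These $0,1$ transitions are enabled because all responders are unfrozen, and since $0,1$ and $1,0$ transitions leave the $q$-fields alone, every agent stays unfrozen throughout; we reach a configuration with $y=y^{*}$ and all agents unfrozen, and repeating the first step drives the embedded protocol to stability on the input $\Tuple{x,y^{*}}$, which satisfies $R$, yielding an output-stable $C^{*}∈S$.

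The step I expect to be the main obstacle is exactly this interplay between the leader machinery and the embedded protocol: the embedded protocol advances only on $0,0$ transitions, its nominal input is the very collection of $y$-fields that the $0,1$ transitions are rewriting, and the leader can land on an agent the embedded protocol needs. The argument hinges on the observation that $0,1$ and $1,0$ transitions leave the embedded state intact, so the leader may be repositioned and the $y$-fields may even be overwritten without corrupting an in-progress embedded computation, together with the ordering in the $R$-false case---stabilize the embedded protocol \emph{first} so that every agent becomes unfrozen, and only then rewrite $y$, since that is what makes the resetting $0,1$ transitions available. The one genuinely external fact being used is that the Angluin~\etal\ stabilizing-input protocol can, from an arbitrary reachable state, be driven to an output-stable state matching whatever its current input is; this is the property already relied on in Lemmas~\ref{lemma-stabilize-on-y} and~\ref{lemma-stable-outputs-are-good}, and I would simply cite it.
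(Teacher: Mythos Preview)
Your proof is correct and follows essentially the same approach as the paper: reduce to showing that from any (infinitely recurring) reachable configuration one can reach an output-stable configuration, then build that path by getting down to one leader, stabilizing the embedded tester, rewriting $y$ via $0,1$/$1,0$ transitions if the tester says $R$ fails, and re-stabilizing. Your treatment is more careful in places---you argue via the recurrent set $S$ that the leader count is already one, and you explicitly note that $0,1$ and $1,0$ transitions leave the $q$-fields untouched so agents stay unfrozen during the rewrite---but the structure and the key ideas match the paper's proof.
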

\begin{proof}
    It is enough to show that from any reachable configuration $C$, there
    is an output-stable configuration $C'$ with $C\stackrel{*}{→}C'$.
    This is because in any infinite execution, there is at least one
    reachable $C$ that occurs infinitely often. Iterating global fairness then
    implies that the output-stable $C'$ also occurs infinitely often,
    which implies that it occurs at least once.

    Fix some reachable configuration $C$. Construct a path from $C$ to
    an output-stable $C'$ following essentially the same formula as in
    the proof of Lemma~\ref{lemma-always-stabilize}:
    \begin{enumerate}
        \item Use $1,1$ transitions to reduce the number of leaders to exactly one if needed.
        \item Let $x$ be the input and let $y(C)$ be the output in
            $C$. If $R(\Tuple{x,y(C)})$ does not hold, run the embedded
            $R(\Tuple{x,y})$ tester until $i.\Frozen = 0$ for all $i$,
            then apply $0,1$ and $1,0$ transitions to change the
            output to some $y$ for which $R(\Tuple{x,y})$ does hold.
            Such a $y$ exists since we are given that $R(x,y)$ is total.
        \item Run the embedded $R(\Tuple{x,y})$ tester until
            $i.\Frozen$ stabilizes to $1$ for all $i$.
    \end{enumerate}
\end{proof}

Combining the lemmas gives:
\begin{theorem}
    \label{theorem-predicate-implies-relation}
    Let $R(\Tuple{x,y})$ be a predicate that is stably computable for
    $n≥3$ on a complete interaction graph, such that
    $R(x,y)$ is total.
    Then $R(x,y)$ is a stably computable relation on for $n≥3$ on a
    complete interaction graph.
\end{theorem}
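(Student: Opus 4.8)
The plan is to do essentially no new work: the three preceding lemmas have been stated precisely so that they line up with the three clauses in the definition of stably computing a relation for $n \ge 3$, and the proof of the theorem is the bookkeeping step that checks this alignment. Concretely, I would verify that Algorithm~\ref{alg-predicate-implies-relation} (i) is output-stable in every globally fair execution on every input, (ii) never converges to an output $y$ with $\neg R(x,y)$, and (iii) can converge to every output $y$ with $R(x,y)$; and that (ii) together with (iii) is exactly the statement that $R(x,y)$ holds iff $y$ is a possible output.

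First I would pin down the reduction. Since $R(\Tuple{x,y})$ is stably computable for $n \ge 3$ on a complete interaction graph, fix a protocol computing it — in the stabilizing-input form of Angluin~\etal~\cite{AngluinACFJP2005} assumed in the construction, with output readable from $q.\Frozen$ — and use it as the embedded $\delta_R$. Its state set is finite, so the composite described by Algorithm~\ref{alg-predicate-implies-relation} is a legitimate population protocol with input alphabet $X$, output alphabet $Y$, and output map $q \mapsto q.y$. Clause (i) is then precisely Lemma~\ref{lemma-always-stabilize}; note this is where totality of $R(x,y)$ is needed, since otherwise the protocol could wander through candidate outputs forever without ever reaching a frozen configuration. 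Clause (ii) is Lemma~\ref{lemma-stable-outputs-are-good}, using that $R(\Tuple{x,y})$ and $R(x,y)$ denote the same predicate.

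For clause (iii): Lemma~\ref{lemma-stabilize-on-y} produces, for each $y$ with $R(x,y)$ and each input $x$, a finite execution from the initial configuration that reaches an output-stable configuration with output $y$; extending this finite prefix to a globally fair execution in any fashion leaves the output equal to $y$ forever, so $y$ is genuinely a possible stable output. Conversely, any output on which some globally fair execution converges satisfies $R(x,y)$ by clause (ii) (clause (i) guarantees such convergent executions are exactly the fair ones). Hence $R(x,y)$ holds if and only if $y$ is a possible output of Algorithm~\ref{alg-predicate-implies-relation} on input $x$, which is the definition of stably computing the relation $R(x,y)$ for $n \ge 3$.

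I do not expect a real obstacle at the level of the theorem: the one genuinely delicate point — that a globally fair schedule of the composite need not restrict to a globally fair schedule of the embedded $\delta_R$ unless one can always route the interaction around whichever agent currently holds the leader bit, which is exactly where $n \ge 3$ is used — has already been handled inside the proofs of Lemmas~\ref{lemma-stabilize-on-y} and~\ref{lemma-stable-outputs-are-good}. So the hard part is done, and what remains is just confirming that the three lemmas discharge the three clauses of the definition.
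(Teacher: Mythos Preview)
Your proposal is correct and matches the paper's own proof almost exactly: both invoke Lemma~\ref{lemma-always-stabilize} for output-stability, Lemma~\ref{lemma-stabilize-on-y} for the forward inclusion, and Lemma~\ref{lemma-stable-outputs-are-good} for the reverse inclusion, concluding that Algorithm~\ref{alg-predicate-implies-relation} stably computes $R(x,y)$ for $n\ge 3$. Your version even adds a little more care than the paper does, noting explicitly that the finite execution produced by Lemma~\ref{lemma-stabilize-on-y} must be extended to a globally fair infinite execution, and flagging where totality and the $n\ge 3$ hypothesis are consumed.
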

\begin{proof}
    Lemma~\ref{lemma-always-stabilize} shows that
    Algorithm~\ref{alg-predicate-implies-relation} is eventually
    output-stable in all fair executions with $n≥3$.
    Let $R'(x,y)$ be the relation stably computed by
    Algorithm~\ref{alg-predicate-implies-relation} in these executions.
    Lemma~\ref{lemma-stabilize-on-y} shows that $R(\Tuple{x,y})$
    implies $R'(x,y)$, and Lemma~\ref{lemma-stable-outputs-are-good}
    shows that $R'(x,y)$ implies $R(\Tuple{x,y})$. So $R(x,y) =
    R'(x,y)$ is stably computed by
    Algorithm~\ref{alg-predicate-implies-relation} for $n≥3$.
\end{proof}

\section{Not all stably computable relations give stably computable
predicates}

First let us show that reachability is a stably computable relation:
\begin{theorem}
    \label{theorem-reachability}
    Let $P$ be a population protocol on a complete interaction graph and let $C\stackrel{*}{→}C'$ if
    there is a sequence of zero or more steps of $P$ starting in
    configuration $C$ and ending in $C'$.
    Then the relation $\stackrel{*}{→}$ is stably computable on a
    complete interaction graph for
    $n≥3$.
\end{theorem}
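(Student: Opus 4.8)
The plan is to have a protocol $P'$ simulate $P$ one step at a time while carrying a leader-controlled mechanism for halting the simulation and locking in whatever configuration has been reached. Each agent of $P'$ stores a simulated state $s \in Q$, initialised to its input and modified only by simulated steps of $P$; a \Leader bit, initially $1$ for every agent; and a \Frozen bit, initially $0$. In the style of Algorithm~\ref{alg-predicate-implies-relation}, the transition function first checks whether either participant is frozen and, if so, sets the \Frozen bit of the other participant to $1$ and does nothing else; otherwise it branches on the two \Leader bits: a $1,1$ interaction clears one leader (so, as in Lemma~\ref{lemma-leader-invariant}, every reachable configuration keeps at least one leader), a $0,0$ interaction performs a step of $P$ on the two simulated states via $i.s, r.s \gets \delta_P(i.s, r.s)$, and a $1,0$ or $0,1$ interaction nondeterministically either moves the leader bit to the other agent or sets the leader's \Frozen bit to $1$. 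The output map returns $s$, so the output of $P'$ is precisely its current simulated configuration.

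One inclusion is essentially immediate, analogous to Lemma~\ref{lemma-stable-outputs-are-good}: the simulated configuration changes only through $0,0$ interactions, each of which is literally a step of $P$, so throughout any execution it is reachable from the input configuration $C$; hence if $P'$ stabilises with output $C'$, then $C \stackrel{*}{\rightarrow} C'$. For the reverse direction, given a path $C = D_0 \to D_1 \to \dots \to D_k = C'$ of steps of $P$, I would exhibit a stabilising execution of $P'$ as in Lemma~\ref{lemma-stabilize-on-y}: use $1,1$ interactions to reduce to a single leader; replay the path, simulating the step $D_j \to D_{j+1}$ between agents $a$ and $b$ by first using a $1,0$ or $0,1$ interaction to park the leader on a third agent if it currently sits on $a$ or $b$ (the one place where $n \geq 3$ is used) and then performing the $0,0$ interaction on $a$ and $b$; and finally set the leader's \Frozen bit with a $1,0$ or $0,1$ interaction and let the \Frozen bit propagate to every agent. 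The resulting configuration has simulated configuration $C'$ with every agent frozen, so every interaction out of it is a no-op; it is a fixed point, hence output-stable with output $C'$.

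It remains to show, mirroring Lemma~\ref{lemma-always-stabilize}, that $P'$ is eventually output-stable in every globally fair execution with $n \geq 3$. As there, it is enough to reach, from an arbitrary reachable configuration $D$, some output-stable configuration: if an agent of $D$ is already frozen, propagate the freeze to all agents; otherwise first reduce to one leader with $1,1$ interactions, set its \Frozen bit, and propagate. Either way an all-frozen fixed point is reached, and iterating global fairness along such a path from a configuration that recurs infinitely often yields eventual output-stability. Together with the two inclusions, this shows that $P'$ stably computes $\stackrel{*}{\rightarrow}$ for $n \geq 3$.

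The conceptual point --- and the reason the naive protocol that merely simulates $P$ and reports its current configuration fails --- is that $P$ itself need not stabilise, so $P'$ must be able to decide unilaterally to stop. The work is therefore concentrated in the freeze mechanism: one must check at once that a fully frozen configuration is a genuine fixed point (so that we obtain true output-stability, not just convergence in the limit), that the freeze can always be driven to completion from any reachable configuration, and that before freezing any single step of $P$ can be realised by first parking the leader on an uninvolved agent --- which is exactly where $n \geq 3$ is needed, just as in the earlier lemmas.
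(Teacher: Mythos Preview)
Your argument is correct and follows the same high-level plan as the paper---simulate $P$ step by step and nondeterministically decide to halt---but realises it through a different mechanism. The paper uses two mobile tokens $a$ and $b$: an $ab$ interaction simulates a step of $P$ on the two token-holders, an opposite-order $ba$ interaction annihilates both tokens (this is the freeze), and the tokens are shuttled onto any desired pair of agents via blanks. You instead recycle the leader/freeze machinery of Algorithm~\ref{alg-predicate-implies-relation}: simulation happens between \emph{non}-leaders via $0,0$ interactions, and the single surviving leader serves only as the carrier of the nondeterministic ``halt now'' decision, which then propagates as a \Frozen epidemic. Both constructions need $n\ge 3$ for exactly the same reason---a third agent to park a token on while setting up a particular simulated step---and both yield genuine fixed points once halted. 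Your version has the pedagogical advantage of reusing the earlier lemmas almost verbatim; the paper's two-token version is a little more self-contained and avoids the separate freeze-propagation phase.

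One small point worth making explicit: you describe the $1,0$/$0,1$ case as ``nondeterministically'' either moving the leader or freezing it, but the transition function $\delta$ must be deterministic. What you presumably intend is that $1,0$ does one of these and $0,1$ the other, with the adversary's choice of initiator versus responder supplying the nondeterminism; say so, and the construction goes through as written.
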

\begin{proof}
    The idea is to simulate $P$ for a while and then
    nondeterministically choose to freeze its configuration.
    Let $δ$ be the transition function for $P$.
    The state space for the simulation algorithm is $\Set{a,b,-}×Q_P$,
    and the transition function is given by the following rules:
    \begin{displaymath}
        \begin{array}{ccccc}
            (a,q_1) & (a,q_2) & → & (a,q_1) & (b,q_2) \\
            (b,q_1) & (b,q_2) & → & (b,q_1) & (-,q_2) \\
            (a,q_1) & (-,q_2) & → & (-,q_1) & (a,q_2) \\
            (b,q_1) & (-,q_2) & → & (-,q_1) & (b,q_2) \\
            (-,q_1) & (a,q_2) & → & (a,q_1) & (-,q_2) \\
            (-,q_1) & (b,q_2) & → & (b,q_1) & (-,q_2) \\
            (a,q_1) & (b,q_2) & → & (a,δ_1(q_1,q_2)) & (b,δ_2(q_1,q_2) \\
            (b,q_1) & (a,q_2) & → & (-,q_1) & (-,q_2) \\
        \end{array}
    \end{displaymath}

    Initially all agents start with an $a$ token and a state $q$ drawn
    from the input configuration $C$. The first two rules
    eventually prune these down to a single $a$ and a single $b$
    token, leaving the remaining agents with a $-$ (blank) token.

    Any $ab$ interaction updates the $q$ components of each agent
    according to the transition relation for $P$.

    The $ba$ interaction removes one pair of tokens; once no $a$
    tokens are left or no $b$ tokens are left, the output is stable.
    This eventually occurs in all fair executions.

    While there are many possible ways an execution of the simulation
    can go, starting from a particular input $C$ with $n≥3$ we can
    stabilize on a particular $C'$ with $C\stackrel{*}{→}C'$ as follows:
    \begin{enumerate}
        \item Apply the $aa$ rule until there is exactly one $a$
            token.
        \item Apply the $bb$ rule until there is exactly one $b$
            token.
        \item For each step of an execution $C\stackrel{*}{→}C'$, let
            $i$ and $r$ be the initiating and responding agents. Use
            the $a-$ and $b-$ rules to move the unique $a$ and $b$
            tokens onto $i$ and $r$, then apply the $ab$ rule to
            update the simulated $P$ states of $i$ and $r$. This is
            the step that requires $n≥3$, since we may need to move
            the $a$ or $b$ token to a third agent if we need to swap
            them between $i$ and $r$.
        \item Once $C'$ is reached, use the $ba$ rule to remove the
            $a$ and $b$ tokens. Since all agents now carry a blank
            token, the $a,b$ rule is no longer enabled, and the $Q$
            components of the agents never change from $C'$.
    \end{enumerate}

    Conversely, since the $a,b$ transitions only update states
    according to $δ$, any output $C'$ generated by this protocol
    satisfies $C\stackrel{*}{→}C'$. So the protocol stably computes
    $\stackrel{*}{→}$ for $n≥3$.
\end{proof}

Next, we show that reachability for population protocols is not, in
general, semilinear. The idea is to adapt a classic non-semilinear construction of
Hopcroft and Pansiot~\cite{HopcroftP1979} for vector addition systems
to the specific case of population protocols.\footnote{This
construction is mentioned in the work of
Angluin~\etal~\cite{AngluinAER2007} showing semilinearity of stably
computable predicates, but in the more general context of arbitrary
vector addition systems, which requires no change to the original
construction of Hopcroft and Pansiot~\cite{HopcroftP1979}. While adapting the construction to the more
specific context of population protocols is not difficult, we do so
here for completeness.}

\begin{lemma}
    \label{lemma-not-semilinear}
    There exists a population protocol for which the reachability
    relation $\stackrel{*}{→}$ on a complete interaction graph is not semilinear.
\end{lemma}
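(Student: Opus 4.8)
The plan is to adapt the classic non‑semilinear vector addition system of Hopcroft and Pansiot to a concrete population protocol $P$ that implements a \emph{doubling counter}. A single leader agent sequences a number of \emph{rounds}, each of which at most doubles a ``main pile'' of agents; the number of rounds to be performed is encoded in the starting configuration as a pile of ``timer'' agents, and the leader consumes one timer per round, so the number of rounds is bounded by the number of timers supplied. Concretely, the non‑leader agents occupy states $U$ (main pile), $V$ (scratch pile), $T$ (timer), $F$ (frozen), and $\bot$ (blank, acting as a reservoir). A round has two phases. In phase~A the leader repeatedly turns one $U$‑agent and one blank agent into two $V$‑agents; since an interaction touches only two agents, this is realized by two consecutive interactions with the leader, which carries a one‑bit ``debt'' in between ($(\text{leader},U)\to(\text{leader}',V)$ then $(\text{leader}',\bot)\to(\text{leader},V)$), and the leader picks up a new $U$ only when its debt bit is $0$. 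In phase~B the leader moves $V$‑agents back to $U$ one at a time. The leader's internal transitions are one‑directional within a round: it consumes a timer, runs some of phase~A, runs some of phase~B, and is then ready for the next timer; when it chooses to stop it enters a \emph{done} state in which it freezes every remaining $U$‑agent into an $F$ and turns every remaining $V$‑agent back into a $\bot$. (A standard $LL\to L\bot$ rule can be added to prune extra leaders, but this is not needed for the argument below, which only looks at configurations that already contain exactly one leader.)

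Two facts about $P$ drive the argument. First, an upper bound: let $\Psi = 2u + 2f + v$, where $u,f,v$ are the numbers of agents in states $U,F,V$. Every transition of $P$ leaves $\Psi$ unchanged except the phase‑B move $V\to U$, which raises it by $1$; the number of such moves in one round is at most the number of $V$‑agents present when phase~B begins, which is at most $\Psi$, so $\Psi$ grows by at most a factor of $2$ per round and is non‑increasing between rounds. A ``clean'' start configuration — one leader in its initial state, one $U$‑agent, $k$ timers, all other agents blank — has $\Psi = 2$, and at most $k$ rounds can occur, so any reachable clean final configuration — leader done, no $U$, $V$, or timer agents, $m$ agents frozen — satisfies $\Psi = 2m \le 2^{k+1}$, i.e.\ $m \le 2^{k}$. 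Second, achievability: from the clean start configuration with $k$ timers and $2^{k}-1$ blanks (so $n = 2^{k}+k+1$), the execution that performs every doubling in full and then freezes reaches the clean final configuration with exactly $m = 2^{k}$ (each consumed timer supplies one blank, accounting for the extra $k$ agents). Because the reachability relation ranges over all population sizes, there is no difficulty with $n$ being exponential in $k$.

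To conclude, regard the reachability relation as the set $S\subseteq ℕ^{Q×Q}$ of count vectors of zipped configurations $\Tuple{C,C'}$ with $C\stackrel{*}{→}C'$; this is well defined on count vectors because on a complete interaction graph $C\stackrel{*}{→}C'$ implies $\sigma(C)\stackrel{*}{→}\sigma(C')$ for every permutation $\sigma$, and two zipped configurations with the same count vector differ by such a simultaneous permutation. Let $L$ be the set of count vectors of zipped configurations whose $C$‑part is a clean start configuration with some number $k$ of timers and whose $C'$‑part is a clean final configuration with some number $m$ of frozen agents; $L$ is visibly semilinear. Since semilinear sets are closed under intersection with semilinear sets and under linear projection, if $S$ were semilinear then so would be the set $S'\subseteq ℕ^2$ obtained from $S\cap L$ by reading off $(k,m)$. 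But the two facts above give $\{(k,2^{k}):k\ge 0\}\subseteq S'\subseteq\{(k,m):m\le 2^{k}\}$, so the envelope $g(k)=\max\{m:(k,m)\in S'\}$ is finite for every $k$ and equals $2^{k}$. This is impossible for a semilinear subset of $ℕ^2$ all of whose sections are finite: in any linear set $b+ℕv_1+\dots+ℕv_r$ a period vector with first coordinate $0$ would have to have second coordinate $0$ as well (otherwise some section would be infinite), so every nonzero period vector has positive first coordinate and hence bounded ratio of second to first coordinate; taking the finite union, $g$ is bounded above by an affine function of $k$, not by $2^{k}$. The contradiction shows $S$ is not semilinear.

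The step I expect to be the main obstacle is realizing the round structure faithfully with only anonymous pairwise interactions. As already in the Hopcroft--Pansiot setting, there is no way to test a pile for emptiness, so many ``partial'' behaviours are reachable — a doubling can be stopped early, $V$‑agents can be left lying around, a round can begin with a nonempty scratch pile — and the full reachable set is genuinely messy. The real content is isolating the monovariant $\Psi$ together with the one‑directionality of the leader's per‑round transitions, which force $m\le 2^{k}$ on the clean sub‑domain while still allowing $m = 2^{k}$ to be attained; pinning down these two properties carefully is where the work lies, and everything else — the closure properties of semilinear sets and the linear growth bound on their envelopes — is routine.
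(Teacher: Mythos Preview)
Your proof is correct and follows the same Hopcroft--Pansiot strategy as the paper: build a doubling-counter population protocol whose reachable pile size from $k$ timers is bounded by $2^k$ via a monovariant, exhibit a matching achievable value, and conclude that the induced $(k,m)$ set cannot be semilinear. The paper's realization differs only in packaging---it uses a symmetric $c\leftrightarrow d$ doubling governed by a single non-increasing multiplicative potential $f=2^a(2c+d)$ rather than your phase-structured protocol with a per-round doubling bound on the additive $\Psi$, and it extracts the contradiction by writing down a Presburger formula rather than by your intersection-projection-envelope argument---but the substance is the same.
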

\begin{proof}
    Consider the population protocol with state set and
    a transition function given by the following rules, with the usual
    convention that omitted rules are no-ops:
    \begin{align*}
        1c &→ 2d \\
        2b &→ 1d \\
        1a &→ 3b \\
        3d &→ 4c \\
        4b &→ 3c \\
        3a &→ 1b \\
    \end{align*}

    The intent is that the numbered states are held by a unique
    finite-state controller agent, while the remaining states are used
    to generate up to $2^k$ $c$ and $d$ tokens starting from $k$ $a$
    tokens, one $c$ token, and a large supply of $b$ (blank) tokens.

    Intuitively, the $1$ and $2$ states expand $c$ tokens into pairs of
    $d$ tokens, with the second $d$ token obtained by recruiting an
    agent with a $b$ token. The $3$ and $4$ states do the reverse,
    turning each $d$ into two $c$'s. Swapping the direction of
    doubling consumes an $a$ token, limiting the maximum number of doublings 
    during an execution to the number of initial $a$'s.

    We can prove this by constructing an appropriate invariant.
    Let us abuse notation by using $a,b,c,d$ for the count of the
    number of $a,b,c,d$ agents in a particular configuration. 
    For a configuration $C$ with exactly one numbered agent, let $q$ be
    the state of that agent, and
    define the
    function
    \begin{align*}
        f(C) &=
        \begin{cases}
            2^a (2c+d) & \text{when $q=1$} \\
            2^a (2c+d+1) & \text{when $q=2$} \\
            2^a (2d+c) & \text{when $q=3$} \\
            2^a (2d+c+1) & \text{when $q=4$} \\
        \end{cases}
    \end{align*}

    It is straightforward to check that $f$ is non-decreasing for any
    transition and preserved by at least one possible non-trivial
    transition if non-trivial transitions are available. We use the usual
    convention that any values labeled with a prime are the new values 
    following the transition.
    \begin{itemize}
        \item 
            A $1c$ transition sets $c' = c-1$, $d'
            = d+1$, and $q=2$, leaving $f' = 2^a(2c' + d' + 1) = 2^a(2(c-1) +
            d + 2) = 2^a(2c+d) = f$
        \item 
            A $2b$ transition sets $d'
            = d+1$, and $q=1$, leaving $f' = 2^a(2c + d') =
            2^a(2c+d+1) = f$.
        \item
            A $1a$ transition sets $a' = a-1$ and $q=3$, leaving
            $f' = 2^{a-1} (2d+c) ≤ 2^a(2c+d) = f$, with equality when
            $c=0$.
        \item The remaining transitions are symmetric.
    \end{itemize}

    Consider now an execution that starts with $q=1$, $a=k$, $c=1$, and
    $d=0$. This gives $f = 2^a (2c + d) = 2^{k+1}$. With sufficiently
    many initial $b$ tokens, it is possible to apply only transitions
    at each step that preserve equality, eventually converging to a
    configuration with $a=0$ and one of $c$ or $d$ also equal to $0$.
    This will make whichever of $c$ or $d$ is not zero equal to
    $2^{k+1}$. Furthermore, the sum $c+d$ is non-decreasing throughout
    this execution, meaning that any value $c+d ≤ 2^{k+1}$ appears in
    some reachable configuration.

    Suppose now that reachability for this protocol is semilinear. 
    Using Presburger's characterization of semilinearity predicates as
    precisely those definable in first-order Presburger
    arithmetic~\cite{Presburger1929}, this would imply the existence
    of a first-order Presburger formula $φ$ such that $φ(C,C')$ holds
    if and only if $C \stackrel{*}{→} C'$, where
    the parameters to $φ$ represent counts of agents in particular
    states. Writing $x_{C}$ for the number of agents in state $x$ in
    configuration $C$, we can define a new first-order Presburger
    formula in variables $\ell$
    and $k$ by
    \begin{align*}
        ∃C,C':
         φ(C,C')
        &∧ (1_C = 1) 
        ∧ (2_C = 0) 
        ∧ (3_C = 0) 
        ∧ (4_C = 0) 
        \\&∧ (a_C = k)
        ∧ (c_C = 1)
        ∧ (d_C = 0)
        \\&∧ (c_{C'} + d_{C'} = \ell).
    \end{align*}
    This sets the initial configuration $C$ to the
    initial configuration in a class of executions yielding $1 ≤ c+d ≤
    2^{k+1}$ throughout and sets $\ell = c+d$. But the set
    $\SetWhere{(\ell,k)}{1≤\ell ≤ 2^{k+1}}$ is not semilinear,
    contradicting the assumption that $\stackrel{*}{→}$ is semilinear.
\end{proof}

Since all stably computable predicates are
semilinear~\cite{AngluinAER2007}, it follows that:
\begin{corollary}
    \label{corollary-stably-computable-relation-not-predicate}
    There exists a stably computable relation $R(x,y)$ such that the
    corresponding predicate $R(\Tuple{x,y})$ is not stably computable.
\end{corollary}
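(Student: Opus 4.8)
The plan is to take for $R(x,y)$ the reachability relation $\stackrel{*}{→}$ of the specific protocol $P$ supplied by Lemma~\ref{lemma-not-semilinear}, and to obtain the corollary by combining the three ingredients already in hand: $\stackrel{*}{→}$ is a stably computable relation (Theorem~\ref{theorem-reachability}); every stably computable predicate is semilinear~\cite{AngluinAER2007}; and the reachability relation of $P$ is not semilinear (Lemma~\ref{lemma-not-semilinear}). So the only real content is to argue that if the predicate $\stackrel{*}{→}(\Tuple{C,C'})$ were stably computable, then the relation $\stackrel{*}{→}$ would be semilinear, contradicting Lemma~\ref{lemma-not-semilinear}.

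I would set this up as follows. Suppose $\stackrel{*}{→}(\Tuple{C,C'})$ is stably computable; then by the characterization of Angluin~\etal\ the set $S \subseteq \mathbb{N}^{Q_P \times Q_P}$ of count vectors of zipped inputs $\Tuple{C,C'}$ on which it holds is semilinear. Since the interaction graph is complete and agents are anonymous, a configuration of $P$ is determined by its count vector in $\mathbb{N}^{Q_P}$, and reachability depends only on these count vectors. The count vector $w$ of a zipped input records how many agents carry each pair $(p,p') \in Q_P \times Q_P$; its two coordinate marginals $\pi_1(w),\pi_2(w) \in \mathbb{N}^{Q_P}$ are exactly the count vectors of $C$ and of $C'$, and $w \mapsto (\pi_1(w),\pi_2(w))$ is a linear map. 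Hence the image $\pi(S)$ is again semilinear, since semilinear sets are closed under linear images. Because any pair $(u,v)$ with $\card{u} = \card{v}$ admits some coupling $w$ with those marginals, and because $u \stackrel{*}{→} v$ depends only on $u$ and $v$ and already forces $\card{u} = \card{v}$, the image $\pi(S)$ is precisely $\SetWhere{(u,v)}{u \stackrel{*}{→} v}$. Thus $\stackrel{*}{→}$ is semilinear, contradicting Lemma~\ref{lemma-not-semilinear}. Together with Theorem~\ref{theorem-reachability} (and Lemma~\ref{lemma-small-cases} to push the bound $n \geq 3$ down to all $n$), this yields the corollary.

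The one step that needs care — and the one I expect to be the main obstacle — is reconciling the two notions of semilinearity: the hypothesis gives semilinearity of a predicate over the product alphabet $Q_P \times Q_P$, i.e.\ of a subset of $\mathbb{N}^{Q_P \times Q_P}$, whereas Lemma~\ref{lemma-not-semilinear} is about a subset of $\mathbb{N}^{Q_P} \times \mathbb{N}^{Q_P}$. Bridging the two amounts to checking that projecting away the ``coupling'' information between $C$ and $C'$ preserves semilinearity and loses nothing relevant (surjectivity onto equal-size pairs), which is routine given closure of semilinear sets under linear images. Everything else is a direct appeal to the cited results.
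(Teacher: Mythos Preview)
Your proposal is correct and follows the paper's approach exactly: combine Theorem~\ref{theorem-reachability}, Lemma~\ref{lemma-not-semilinear}, and the semilinearity characterization of~\cite{AngluinAER2007}. The paper states the corollary as a one-line consequence and does not spell out the passage from semilinearity in $\mathbb{N}^{Q_P \times Q_P}$ to semilinearity in $\mathbb{N}^{Q_P} \times \mathbb{N}^{Q_P}$; your marginal-projection argument is a clean and correct way to fill in that detail.
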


\section{Single-valued stably computable relations give stably computable predicates}

\begin{theorem}
    \label{theorem-single-valued}
    Let $R(x,y)$ be a single-valued relation. Then if $R(x,y)$ is
    stably computable, so is $R(\Tuple{x,y})$.
\end{theorem}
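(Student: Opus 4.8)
The plan is to realise the predicate $R(\Tuple{x,y})$ by a protocol $P'$ that, inside each agent, simulates a protocol $P$ stably computing the relation $R$ and simultaneously runs a ``consensus'' sub-protocol checking whether the output vector currently produced by the simulated $P$ agrees with the given vector $y$. Fix a protocol $P$, with state set $Q_P$, input map $I_P$ and output map $O_P$, that stably computes $R$. Since $R$ is single-valued, on any input $x$ every fair execution of $P$ eventually stabilizes on the unique vector $y^{*}(x)$ with $R(x,y^{*}(x))$; hence $R(x,y)$ holds exactly when $y=y^{*}(x)$, i.e.\ exactly when every agent $i$ eventually has $O_P(p_i)=y_i$, writing $p_i$ for agent $i$'s simulated $Q_P$-state. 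So it suffices to detect the predicate ``$O_P(p_i)=y_i$ at every agent'' from local bits $b_i\in\Set{0,1}$ that are eventually stable but may fluctuate at first; this predicate is semilinear, so it can be tracked by an embedded copy $M$ of the Angluin~\etal~\cite{AngluinACFJP2005} protocol for semilinear predicates with stabilizing inputs, just as in Section~2.

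Concretely, each agent of $P'$ stores a read-only copy of its own input pair $\Tuple{x_i,y_i}$, a simulated state $p_i\in Q_P$ initialised to $I_P(x_i)$, and a state $s_i$ of $M$ initialised from the bit $[\,O_P(I_P(x_i))=y_i\,]$. On an interaction between agents $i$ and $r$, first update the $p$-components by $\delta_P$, then recompute $b_i=[\,O_P(p_i)=y_i\,]$ and $b_r=[\,O_P(p_r)=y_r\,]$ and apply one step of $M$ with these bits as its current inputs; the output of $P'$ at an agent is the current output of its $M$-component.

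The correctness argument then has three parts. First, the execution of $P$ obtained by projecting onto the $p$-components is globally fair: since $P'$ has finitely many states, any $P$-configuration that occurs infinitely often is the projection of some $P'$-configuration that occurs infinitely often, and global fairness of $P'$ carries through any needed $P$-transition because the $p$-component of every $P'$-transition is precisely $\delta_P$. Hence the embedded $P$-execution stabilizes, and by single-valuedness it stabilizes on $y^{*}(x)$, so the bits $b_i$ stabilize to $[\,y^{*}(x)_i=y_i\,]$, which are all $1$ exactly when $R(x,y)$ holds. Second, by the same pigeonhole argument the embedded execution of $M$ is fair, and its inputs have stabilized by the first part, so the stabilizing-input guarantee of~\cite{AngluinACFJP2005} makes every agent's $M$-output converge to the value of ``all $b_i=1$'', i.e.\ to the truth value of $R(x,y)$. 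Third, it follows that the output of $P'$ stabilizes to $R(x,y)$ at all agents, so $P'$ stably computes the predicate $R(\Tuple{x,y})$; the restriction to $n$ large enough for $P$ and $M$ to be correct is removed as in Section~\ref{section-small-cases}.

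The step I expect to need the most care is making the composition rigorous — arguing that projecting a globally fair execution of $P'$ onto its $P$- and $M$-components yields executions fair in the senses required by ``$P$ stably computes $R$'' and by the stabilizing-input correctness of $M$. The finite-state pigeonhole argument above is the crux, and one must be careful that the bits fed to $M$ depend only on the already-updated $p$-components, so that once $P$'s output has stabilized the $M$-component sees a fixed input configuration forever, even though the underlying $p$-states themselves need not become constant.
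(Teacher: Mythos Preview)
Your proposal is correct and follows essentially the same approach as the paper: run the protocol $P$ that stably computes $R$ to obtain the unique $y^{*}(x)$, then feed its (eventually stable) output into the stabilizing-inputs construction of~\cite{AngluinACFJP2005} to test equality with the given $y$. You supply more detail on the composition and on why the projected executions of $P$ and $M$ remain fair, but the underlying idea is identical to the paper's two-sentence proof.
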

\begin{proof}
    Let $P$ be a protocol that stably computes $R(x,y)$. Since $R$ is
    single-valued, the output of $P$ on input $x$ always converges to
    the same value $y$. To stably compute $R(\Tuple{x,y'})$, construct
    a protocol that uses $P$ to compute the unique $y$ such that
    $R(x,y)$ holds, then make this output the input to a
    test for $y=y'$ using the stabilizing-inputs construction
    of~\cite{AngluinACFJP2005}.
\end{proof}

This construction only works for single-valued relations. If some $x$
has distinct $y_1≠y_2$ with $R(x,y_1)$ and $R(x,y_2)$, then if the
embedded computation stably outputs $y_1$, the protocol will fail to recognize
$R(\Tuple{x,y_2})$. From
Corollary~\ref{corollary-stably-computable-relation-not-predicate} we know that 
there can be no general method to overcome this problem, although for
specific multiple-valued $R(x,y)$ it may be that $R(x,y)$ is
stably computed by some specialized protocol.

\section{Handling small cases}
\label{section-small-cases}

\newData{\Rank}{rank}
\newData{\Index}{index}
\newData{\Data}{data}

The assumption that $n$ is sufficiently large can be removed subject
to a small technical condition on executions involving only one
agent.
Because this construction may be useful in other
contexts, we state it as a generic lemma:
\begin{lemma}
    \label{lemma-small-cases}
    Let $R(x,y)$ be a relation that is stably computable for $n≥n_0$
    and single-valued for $n=1$.
    Then $R(x,y)$ is stably computable for all $n$.
\end{lemma}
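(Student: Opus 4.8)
The plan is to build one protocol $P'$ that runs three things in parallel --- a census-plus-leader-election, an embedded copy of the given protocol $P$ that stably computes $R$ for $n\ge n_0$, and a ``small-case handler'' --- and that routes its output from $P$ once it has learned $n\ge n_0$ and from the handler otherwise. Throughout we take $R$ to be total for every $n$, which is in any case necessary for the conclusion to hold. Each agent stores its immutable input $x_i$; a census field in $\{1,\dots,n_0\}\cup\{\bot\}$; an embedded state of $P$, initialized through $P$'s input map and advanced on every interaction; and the handler's bookkeeping. When $n=1$ no interaction ever occurs, so we simply arrange the state set and output map of $P'$ so that the lone agent's initial state maps to the unique $y$ with $R(x,y)$, which exists by single-valuedness at $n=1$; everything below concerns $n\ge 2$.

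For the census I would use the ``numbered-token'' rule $(i,j)\mapsto(\min(i{+}j,n_0),\bot)$, applied whenever both interacting agents hold a number. A short invariant argument --- the provenance sets of the numbered agents always partition the agents, and a numbered agent whose provenance has size $k$ holds the value $\min(k,n_0)$ --- shows that the count of numbered agents never increases and strictly decreases while above $1$, that any held value is at most $\min(n,n_0)$, and that from any configuration one can drive the protocol to a single numbered agent holding exactly $\min(n,n_0)$; global fairness then forces convergence to that stable configuration. Call the surviving numbered agent the leader, and note the crucial one-sidedness: the census never over-reports, so the value $n_0$ is observed only when $n\ge n_0$. The leader broadcasts a monotone ``$P$-mode'' flag once its census reaches $n_0$; an agent in $P$-mode reads its output from $P$'s output map, every other agent from the handler.

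The correctness split is then clean. If $n\ge n_0$, the census is eventually and permanently $n_0$, so all agents are eventually in $P$-mode; the induced execution of the embedded $P$ is globally fair, since any $P$-step from an infinitely recurring $P$-configuration is realized by a $P'$-step from one of its finitely many $P'$-extensions, at least one of which also recurs infinitely, so that global fairness of $P'$ makes the resulting $P'$-configuration, hence the resulting $P$-configuration, recur; as $P$ also runs on the correct input, by hypothesis it stabilizes to a correct output, which becomes the stable output of $P'$, and every $y$ with $R(x,y)$ is reached by making $P'$ imitate an execution of $P$ reaching it. If $2\le n<n_0$, the census never reaches $n_0$, so the output is always the handler's; once there is a unique leader it has learned $n$, since its census then equals $n$, whereupon it collects the full input, looks it up in a finite hard-wired table to commit to one of the finitely many valid outputs (nonempty since $R|_n$ is total), distributes the corresponding output symbols to the agents, and freezes, just as the $\Frozen$ flag does in Algorithm~\ref{alg-predicate-implies-relation}. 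As in the proof of Lemma~\ref{lemma-always-stabilize} it then suffices to check that an output-stable configuration is reachable from every reachable configuration; the nondeterministic choice among valid table entries --- needed so that every valid $y$ is reachable while the handler is still forced to settle --- is implemented by the same cycle-a-pointer-and-freeze device already used in this paper for the $y\leftarrow(y{+}1)\bmod|Y|$ update.

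The main obstacle, and the only place real care is needed, is the transient phase before the census has converged: several numbered agents may briefly act as rival leaders, each with a partial census and a partial collection of the input, and such a rival may even commit and start installing a wrong output. One must check that (i) the handler never writes to the census field or to the embedded $P$, so nothing it does in the transient can corrupt the two components doing the real work, and (ii) after the last merge --- and there are at most $n-1<n_0$ merges in any execution --- the now-unique leader recovers cleanly. The subtlety in (ii) is that stale ``collected'' and ``assigned'' marks left on agents by a leader that has since been absorbed must not fool the final leader; I would resolve this with the standard device of tagging every such mark with a bounded \emph{generation} counter that is reset to $\max(g_u,g_v)+1$ whenever leaders $u,v$ merge. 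Since each merge raises the maximum generation by at most one, it stays below $n_0$; and one checks that the final generation strictly exceeds every generation any rival leader ever used, so the final leader sees all agents as uncollected, collects the true input, and --- being the only leader, with no merge left to happen --- is forced by global fairness to commit, distribute, and freeze. Combining the two halves of the split with the trivial $n=1$ case gives that $P'$ stably computes $R(x,y)$ for all $n\ge 2$ without any single-valuedness assumption, and for all $n\ge 1$ provided $R$ is single-valued at $n=1$.
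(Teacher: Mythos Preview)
Your approach is correct in outline and shares the paper's high-level strategy—run $P$ in parallel, detect whether $n \ge n_0$, and fall back on a hard-wired table for small populations indexed by a nondeterministically frozen pointer—but the implementations diverge, and the paper's is substantially simpler.

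The paper does not elect a single counting leader. Instead it runs the Cai--Israeli--Wattenhofer ranking rule $k,k \to k,\min(k+1,n_0)$, which eventually gives every agent a stable rank, distinct in $\{0,\dots,n-1\}$ when $n<n_0$. Crucially, the paper then avoids any active collect-and-distribute phase: each agent simply maintains an array $\Data[0..n_0]$ recording the most recent input and $\Index$ it has observed at an agent of each rank, and the output map is a pure function of this local array together with the embedded $P$ state. If $\Data[n_0] \ne \bot$ the agent outputs $P$'s output; otherwise it reconstructs $n$ and $x$ from the non-$\bot$ entries and outputs coordinate $r$ of the $(\Data[0].\Index \bmod |S|)$-th element of $S = \{y : R(x,y)\}$, where $r$ is its own rank. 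Because ranks only increase and $\Data$ entries are just ``most recently seen'' values, stale information from the transient is silently overwritten once ranks stabilize—no generation counters, no reset logic, no worry about rival leaders installing wrong outputs.

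Your route works too, but the active-leader design is precisely what forces you into the generation-counter machinery, and that machinery needs one more sentence than you gave it: the counter must be explicitly capped (say at $n_0$) to keep the state space finite, since for $n \ge n_0$ there can be up to $n-1$ merges. With the cap the handler may misbehave at large $n$, but that is harmless because $P$-mode eventually takes over; you should say so. The trade-off is that your construction makes the leader's role explicit and reuses the freeze device of Algorithm~\ref{alg-predicate-implies-relation} verbatim, whereas the paper's passive-array trick sidesteps the whole transient problem at the cost of a less operational picture.
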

\begin{proof}
    Let $P$ be a protocol that stable computes $R(x,y)$ for $n≥n_0$.
    We will build a wrapper protocol that stably computes $R(x,y)$ for
    $n<n_0$ and switches its output to the output of $P$ for larger
    $n$.

    The wrapper protocol works by assigning each agent a value
    $\Rank$ in the range $0$ through $n_0$, using a variation on the
    self-stabilizing leader election protocol of
    Cai~\etal~\cite{CaiIW2012}.\footnote{The name of the $\Rank$ value
    follows the use of this protocol by
    Burman~\etal~\cite{BurmanCCDNSX2021} in the context of
    the more general problem of assigning agents unique ranks.}
    The $\Rank$ protocol starts all processes with $\Rank$ $0$ and 
    consists of a single rule $k,k→k,\min(k+1,n_0)$. 
    This eventually assigns each process a stable rank, which will be
    a unique value in the range
    $0\dots n-1$ when $n<n_0$. When $n≥n_0$, at least one process will
    be assigned $n_0$.

    Nondeterminism is provided by an $\Index$ variable stored at the
    (eventually unique) agent with $\Rank$ value $0$, which is
    eventually locked by setting a $\Frozen$ bit that is initially $0$
    for all agents. Let $m$ be the number of distinct possible outputs
    $y$ for populations of size less than $n_0$. Each time an agent
    with $\Rank = 0$ and $\Frozen = 0$ is an initiator, it increments
    $\Index$ mod $m$. If an agent with $\Rank=0$ is a responder, it
    sets $\Frozen = 1$. This ensures that $\Index$ is eventually
    stable while still allowing $\Index$ to take on all possible
    values in the range $0 \dots m-1$ provided $n≥2$.

    Finally, each agent records in an array location $\Data[r]$,
    initially $⊥$,  the
    most recent $\Index$ and input it has observed in any agent with
    $\Rank=r$. Since inputs are fixed and $\Rank$ and $\Index$
    eventually stabilize for all agents, this causes all agents to
    eventually obtain identical $\Data$ arrays when $n<n_0$. When
    $n≥n_0$, $\Data[n_0]$ does not necessarily get the same value at
    all agents, but it does eventually get a non-$⊥$ value at all
    agents.

    Each step of the wrapper protocol is carried out in parallel with
    a step of protocol $P$. No interaction occurs between these
    protocols except in the output function $O$.

    The $\Data$ array and the output of $P$ are used as arguments to 
    $O$. When $\Data[n_0] = ⊥$, each agent can (a) compute $n$ based
    on the largest non-$⊥$ entry in $\Data$, (b) compute the input $x$
    to the protocol from the inputs stored in $\Data$, and (c)
    observe the same $\Data[0].\Index$ value as the other agents.
    Let $S = \SetWhere{y}{R(x,y)}$. The agent with rank $r$ selects the $k$-th
    element $y$ of $S$ where $k = \Data[0].\Index \bmod
    \card{S}$, then outputs $y_r$; in this way the agents collectively
    output the $k$-th value $y∈S$. When $\Data[n_0] ≠ ⊥$, the agent
    outputs whatever its output is in the embedded $P$ computation.

    We now argue that this combined protocol stably computes $R(x,y)$.
    For $n≥n_0$, this follows from the fact that the combined protocol
    stabilizes on the output of $P$, which stably computes $R(x,y)$
    for $n≥n_0$. For $1<n<n_0$, with input $x$, every possible
    output $y$ satisfies $R(x,y)$, and all such $y$ can be obtained by
    an appropriate setting of the common value $\Data[0].\Index$. 
    The remaining case $n=1$ is covered by the assumption that
    $R(x,y)$ is single-valued when $n=1$; the unique agent never
    updates its state and outputs the unique $y$ consistent with its
    input $x$.
\end{proof}

The technical requirement that $R(x,y)$ is single-valued when $n=1$ is
necessary because a population protocol with only one agent cannot
take any steps, making its output a deterministic function of its input. An
alternative might be to insist that any population protocol have at
least two agents, but regrettably this very natural assumption does
not appear in the original definition of the model by
Angluin~\etal~\cite{AngluinADFP2006}.

\section{Conclusions}

We have considered the connection between relations
$R(x,y)$ where $x$ is an input to a population protocol and $y$ a
possible stable output, and their corresponding predicates
$R(\Tuple{x,y})$, where both $x$ and $y$ are provided as inputs.
We have shown that, for population protocols
running on a complete interaction graph with $n≥2$,
total stably computable predicates always give stably computable
relations, but that the converse is not necessarily true unless $R$ is
single-valued.

The proof of some of these results depends on the identity between
stably computable predicates on a complete interaction graph and semilinear
predicates. A natural question is whether these results generalize to
other classes of interaction graphs for which
this property does not hold. Results based on the nondeterministic
freezing technique used in
Theorem~\ref{theorem-predicate-implies-relation} 
and
Theorem~\ref{theorem-reachability} 
seem like good candidates for generalization, but it is not
immediately obvious how to adapt the current handling of
nondeterminism in these constructions to
more general graphs. We
leave exploration of this question to future work.

\bibliographystyle{alpha}
\bibliography{paper.bib}

\end{document}